\newcommand{\la}{\lambda}
\newcommand{\qed}{\hfill$\Box$}
\newtheorem{theorem}{Theorem}
\newtheorem{lemma}{Lemma}
\newtheorem{proposition}{Proposition}
\newtheorem{corollary}{Corollary}
\newtheorem{definition}{Definition}
\newenvironment{proof}{\rm {\bf Proof.}}{}
\begin{document}
\begin{center}
{\bf\Large Decidability of All Minimal Models}\vspace{1.5ex}

{\bf \large(Revised Version -- 2012)}\vspace{1ex}

Vincent Padovani\vspace{1ex}

Equipe Preuves, Programmes et Syst\`emes,\\
Universit\'e Paris VII - Denis Diderot\\
{\tt padovani@pps.univ-paris-diderot.fr}
\end{center}
\begin{quote}
\small {\bf Abstract.} This unpublished note is an alternate, shorter (and hopefully more readable) proof of the decidability of all minimal models. The decidability follows a proof of the existence of a cellular term in each observational equivalence class of a minimal model. 
\end{quote}
The first proof I gave of the decidability of all minimal models \cite{Padovani1995} was far from being easy to understand. I was only an inexperienced student at the time, struggling to solve the problem of the decidability of Higher Order Matching. I was trying to generalise to order five my decidability result at order four (\cite{Padovani1996}, \cite{Padovani2000}) when I realized that every solution of an atomic matching problem (a problem whose right-members are constants of ground type) could be transformed into a {\em cellular} term (the so-called ``transferring'' terms in \cite{Padovani1995}), a term of very simplified structure. The decidability of atomic matching followed immediately from this key-result. A few months later, at the open-problem session of TLCA 1995, Ralph Loader pointed out that another immediate consequence of this decidability result was the existence of a computable selector for the observational equivalence classes of the minimal models of simply-typed lambda-calculus.

Because I was so immersed in the Matching Problem and wanted to prove the decidability of atomic matching at each order, it seemed very natural to prove the existence of cellular representatives by induction on the order of terms. Unfortunately this choice was probably the worst I could make, and resulted in a long, tedious and obfuscated proof. 
Two years later, using the same techniques, Ralph Loader proved the decidability of Unary PCF \cite{Loader1998}. Loader's proof was then drastically simplified by Manfred Schmidt-Schau\ss\ \cite{SchmidtSchauss1999} who gave  a clever, simple and beautiful algorithm to compute a selector for Unary PCF -- {\em a fortiori} for every minimal model. This is the point where I realized that something was probably wrong with my own proof : even if the decidability of all minimal models followed from the existence of cellular representatives at each order,  the latter property was actually independant from the first, and clearly required a proof by induction on the {\em length} of terms.

This unpublished note -- written a few years ago -- presents a short and simple proof of the existence for each term of an observationally equivalent cellular term, followed by a proof of decidability of each minimal model. The proof considers only one ground type, but can be easily extended to finitely many ground types (if you feel it is really necessary, you can try to read \cite{Padovani1995}, or even \cite{Padovani1996} if you can read French). \vfill\pagebreak 

\noindent We consider the $\lambda$-calculus with a single ground type $\circ$, a typing {\em \`a la Church}, and finitely many constants
of type $\circ$. All terms are assumed to be in $\eta$-long form. The notation
 $\lambda y_1\dots y_n.u$ is a shorthand for  $(\lambda y_1\dots (\lambda y_n\,u)\dots)$ and implies
 that $u$ is of ground type. 

Let $t,t'$ be closed terms of the same type.
 We say that $t$ and $t'$ are {\em observationally equivalent} if and only if the following property holds:
\begin{itemize}\parskip 0ex
\item $t,t':\circ$ and $t =_\beta t'$, or
\item $t,t':A\to B$, and for every closed $u:B$, we have $(t\,u)\equiv (t'\,u)$.
\end{itemize}
The following lemma is a well-known result:
\begin{lemma}\mbox{} (Context Lemma)
If $u\equiv u'$ then for all $t$, $(t\,u)\equiv (t'\,u)$.
\end{lemma}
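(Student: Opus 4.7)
My plan is to prove the lemma via a standard logical-relation argument. Define a family of binary relations $R_T$ on closed terms of type $T$, inductively on $T$: $R_\circ(u,u')$ iff $u=_\beta u'$, and $R_{A\to B}(u,u')$ iff for all closed $v,v':A$ with $R_A(v,v')$ one has $R_B(u\,v,u'\,v')$.

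First I would prove the Fundamental Lemma by structural induction on an arbitrary term $M:T$ with free variables $x_i:T_i$: if $\sigma,\sigma'$ are closed substitutions such that $R_{T_i}(\sigma(x_i),\sigma'(x_i))$ for every $i$, then $R_T(M\sigma,M\sigma')$. The abstraction case relies on preservation of $R$ under $=_\beta$, itself a quick induction on $T$. Taking $M$ closed (with empty substitutions) yields reflexivity $R_T(u,u)$ for every closed $u$.

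Next, by induction on $T$, I would show $R_T(u,u')\Rightarrow u\equiv u'$: at arrow type, instantiating $v'=v$ with $R_A(v,v)$ from reflexivity gives $R_B(u\,v,u'\,v)$, hence $u\,v\equiv u'\,v$ by the induction hypothesis at $B$, and this for all closed $v$ is exactly $u\equiv u'$.

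The core of the argument is the converse $u\equiv u'\Rightarrow R_T(u,u')$, which I would prove \emph{simultaneously} with the Context Lemma at $T$, by induction on $T$. At $T=A\to B$, given $u\equiv u'$ and $R_A(v,v')$, I extract $v\equiv v'$ via the induction hypothesis at $A$, invoke the Context Lemma at the strict subtype $A$ (available by IH) to get $u'\,v\equiv u'\,v'$, combine with $u\,v\equiv u'\,v$ (from the hypothesis) to obtain $u\,v\equiv u'\,v'$, and conclude $R_B(u\,v,u'\,v')$ via the induction hypothesis at $B$. The Context Lemma at $T$ itself then falls out: from $u\equiv u'$ we get $R_T(u,u')$, the Fundamental Lemma applied to $t\,x$ yields $R$ on $(t\,u,t\,u')$ at the range type of $t$, and the previous step converts this back into $t\,u\equiv t\,u'$. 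The apparent circularity in this simultaneous induction is the only subtle point, but it is harmless, since the Context Lemma is invoked only at the strict subtype $A$, not at $T$ itself, so the induction on $T$ is well-founded.
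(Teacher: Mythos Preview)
The paper does not actually prove this lemma; it simply records it as a ``well-known result'' and moves on. Your proposal supplies a complete proof via the standard logical-relations technique (essentially Milner's method), and the argument is correct. In particular, the simultaneous induction you describe is well-founded: establishing the direction $\equiv\Rightarrow R_T$ at $T=A\to B$ requires only the Context Lemma at the strict subtype $A$ together with the same direction at $B$, and then the Context Lemma at $T$ itself follows from the just-established $\equiv\Rightarrow R_T$, the Fundamental Lemma, and the previously proved direction $R\Rightarrow\equiv$ (which is available at every type, since its proof did not depend on the Context Lemma). So there is no circularity.

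Two minor remarks. First, the statement as printed in the paper contains an obvious typo --- $(t'\,u)$ should read $(t\,u')$ --- and you have correctly interpreted the intended claim. Second, your derivation of the Context Lemma at $T$ via ``the Fundamental Lemma applied to $t\,x$'' is really just reflexivity $R_{T\to C}(t,t)$ combined with the definition of $R$ at arrow type; phrasing it that way makes clearer that the target type $C$ plays no role in the induction on $T$.
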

The following property 
is {\em false} in a simply-typed $\la$-calculus dealing with higher-order constants, {\em true} if all constants are of ground type :
\begin{proposition}\label{rep} (Stretching lemma)
Let $t = \lambda y_1\dots y_n.M[u]$ be a closed term, where $M$ is a context with a hole of ground type. Then $t$ is observationally equivalent
 to $\lambda y_1\dots y_n.M[M[u]]$.
\end{proposition}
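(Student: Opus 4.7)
The plan is to reduce the stretching statement to an equality of closed ground-type terms, and then to analyze the normal form of the context by exploiting the fact that all constants are of ground type. First, unfolding the definition of observational equivalence, it suffices to show that for every tuple of closed terms $\vec{v}$ of the appropriate types, $(\lambda \vec{y}.\,M[u])\,\vec{v} =_\beta (\lambda \vec{y}.\,M[M[u]])\,\vec{v}$; after $\beta$-reducing the head redexes this reduces to $M'[u'] =_\beta M'[M'[u']]$, where $M'$ denotes $M$ with $v_i$ substituted for $y_i$ (outside the hole), $u' = u[\vec{v}/\vec{y}]$, and both $M'[u']$ and $M'[M'[u']]$ are closed terms of ground type.

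Next, I parameterize the hole of $M'$. Let $x_1,\dots,x_k$ be the variables bound by $M'$ whose scopes contain the hole, with $A_i$ the type of $x_i$, and introduce a fresh variable $h$ of type $A_1\to\cdots\to A_k\to\circ$. Setting $P = M'[h\,x_1\cdots x_k]$, the term $F := \lambda h.\,P$ is closed, and for every $N:\circ$ one has $M'[N] =_\beta F(\lambda\vec{x}.\,N)$. In particular $M'[u'] =_\beta F(\lambda\vec{x}.\,u')$. Since $M'[u']$ is a closed ground-type term in a calculus whose only constants are of ground type, its $\beta$-normal form is a constant $c$. Reducing under $\lambda\vec{x}$ inside $M'[M'[u']] =_\beta F(\lambda\vec{x}.\,M'[u'])$ we see that it is $\beta$-equal to $F(\lambda\vec{x}.\,c)$, so it suffices to prove $F(\lambda\vec{x}.\,c) =_\beta c$.

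The key step is a normal-form analysis of $P$. Since $P:\circ$ has only $h$ as free variable and the only constants are of ground type, the head of the $\beta$-normal form of $P$ is either a constant or $h$. If the head is a constant $c'$, then $P =_\beta c'$ (constants carry no arguments), so $F(\lambda\vec{x}.\,v) =_\beta c'$ for every $v$, giving $c = c'$ and hence $F(\lambda\vec{x}.\,c) =_\beta c$. If the head is $h$, then $P =_\beta h\,N_1\cdots N_k$ for some terms $N_i$, and
\[
F(\lambda\vec{x}.\,c)\ =_\beta\ (\lambda\vec{x}.\,c)\,N_1^*\cdots N_k^*\ =_\beta\ c,
\]
because $\lambda\vec{x}.\,c$ is the constant function returning $c$ regardless of its arguments. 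The main obstacle is precisely this normal-form analysis: it relies essentially on the absence of higher-order constants, matching the observation preceding the statement that the stretching lemma fails in calculi with higher-order constants.
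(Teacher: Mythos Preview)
Your argument is correct and rests on the same idea as the paper's: after substituting closed terms for the $y_i$, the context with a marked hole becomes a ground-type term whose only free symbol is the hole marker, and since all constants are ground its normal form is either that marker or a constant; the two cases give the result. The one difference is in how you mark the hole. You abstract over all binders $x_1,\dots,x_k$ lying above the hole and introduce a higher-type variable $h:A_1\to\cdots\to A_k\to\circ$, whereas the paper simply plugs in a fresh \emph{ground-type} variable $x$: since the hole itself is of type $\circ$, the term $M[\overline t/\overline y][x]$ already has $x$ as its sole free variable, and its normal form is literally $x$ or some constant $a$. Your encoding via $F=\lambda h.\,M'[h\,x_1\cdots x_k]$ is sound but unnecessary here; the single ground variable does all the work and makes the case analysis one line.
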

\begin{proof} Indeed, $M[M[u]] = M[x][M[u]/x]$, and for all closed $\overline{t}$,
 $M[M[u]][\overline{t}/\overline{y}]
 =  M[\overline{t}/\overline{y}][x] [M[u][\overline{t}/\overline{y}]/x]$,
with either
\begin{itemize}\parskip 0ex
\item $M[\overline{t}/\overline{y}][x] =_\beta x$
 and   $M[M[u]][\overline{t}/\overline{y}] =_\beta
 x[[M[u][\overline{t}/\overline{y}]/x] = M[u][\overline{t}/\overline{y}]$, or,
\item
 $M[\overline{t}/\overline{y}][x] =_\beta a$
 and  $M[M[u]][\overline{t}/\overline{y}] =_\beta a [M[u][\overline{t}/\overline {y}]/x] = a$\qed
\end{itemize}
\end{proof}
\subsection*{1. Cells and cellular terms}
From now on,  by $C[\ ]_1\dots [\ ]_K$ we mean a multiple-hole context whose holes are amongst  $[\ ]_1,\dots, [\ ]_K$.
\begin{definition} A {\em cell}
is a context of the form
$$\Sigma[\ ]_1\dots [\ ]_K = (y\, (C_1[\ ]_1\dots [\ ]_K)\ \dots \ (C_p[\ ]_1\dots [\ ]_K))$$ where all $[\ ]_i$ are holes of ground type and each $C_i[\ ]_1\dots [\ ]_K$ is a context with no free variables.
\end{definition}
By definition each cell contains a unique (head) occurrence of a unique free variable.
\begin{definition} A term $t$ is {\em cellular} if and only if it is of 
the following form:
\begin{itemize}\parskip .2ex
\item $\lambda y_1\dots y_n.a$ where $a$ is a constant, or,
\item $\lambda y_1\dots y_n.\Sigma\,[w_1]_1\dots[w_k]_K$
 where:
\begin{itemize}\parskip .2ex
\item  $\Sigma$ is a cell whose free variable is amongst $y_1,\dots, y_n$,
\item each $\lambda y_1\dots y_n.w_j$ is a cellular term.
\end{itemize}
\end{itemize}
A term is called {\em semi-cellular} if it is cellular, or of the form
 $\lambda y_1\dots y_n.y_i\,u_1\dots u_p$ where 
  each  $\lambda y_1\dots \lambda y_n u_i$ is a cellular term.
\end{definition}
Clearly every cellular (resp. semi-cellular) term  is a closed term.
Note that in the definition of a semi-cellular term, $u_i$ is not necessarily of ground type -- {\em e.g.} if $u_i = \la x_1\dots x_p.t$, then $t$ may contain cells with head variables amongst $y_1\dots, y_n, x_1,\dots, x_p$.

  The introduction of cells will simplify our proofs below, but the following alternate definition of a cellular term will probably be easier to grasp. Let $t = \la y_1\dots y_n.u$ where no variable is simultaneously free and bound in $t$. The term $t$ is cellular if it is closed, and if for every subterm $w = (y_i v_1\dots v_n)$ of $u$, the free variables of $w$ are amongst $y_1,\dots, y_n$. In other words, if some variable $z\neq y_1\dots y_n$ is bound in $t$, then no $y_i$ is allowed to occur between $\la z$ and an occurrence of $z$. For instance, the indentity 
$$\la y_1 y_2.y_1(\la z.y_2\,z) : ((\circ\to\circ)\to\circ)\to((\circ\to\circ)\to\circ)$$
is not cellular, whereas
$$\la y_1 y_2.y_1(\la d.y_2(y_1\la z.z))$$
is a cellular term... observationally equivalent to the first. The following proposition is easily proven by induction on the length of $t$:
\begin{proposition}\label{simplcell} Let $\Sigma$ be a cell whose
 free variable is amongst $y_1,\dots,y_n$.
If $\lambda y_1\dots y_n.M[\Sigma[w_1]\dots [w_K]]$ is cellular, 
(resp. semi-cellular) then 
 $\lambda y_1\dots y_n.M[w_k]$ is cellular (resp. semi-cellular).
\end{proposition}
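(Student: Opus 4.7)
The plan is to use the alternative characterization of cellularity given just above the proposition: $\lambda y_1\dots y_n.u$ is cellular iff $u$ is closed relative to $\{y_1,\dots,y_n\}$ and every subterm of $u$ of the form $(y_i\,v_1\dots v_r)$ (headed by one of the outer bound variables) has all of its free variables amongst $y_1,\dots,y_n$. I adopt Barendregt's convention on bound names, so that filling the holes of a cell never causes capture; the free variables of any subterm are then intrinsic to that subterm and independent of the enclosing context.

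For the cellular case the verification is essentially local. Since $\Sigma$'s head variable $y$ lies in $\{y_1,\dots,y_n\}$ and each $C_i$ is a closed context, the free variables of $\Sigma[w_1]\dots[w_K]$ are exactly $\{y\}$ together with the free variables of each $w_j$; in particular they include those of $w_k$. Consequently the free variables of $M[w_k]$ are included in those of $M[\Sigma[w_1]\dots[w_K]]$, which lie in $\{y_1,\dots,y_n\}$, and $\lambda y_1\dots y_n.M[w_k]$ is closed. For the second condition I would take any applicative subterm $w=(y_i\,v_1\dots v_r)$ of $M[w_k]$ and split on its position relative to the hole of $M$: either (i) $w$ is disjoint from the hole, so $w$ is literally a subterm of the original $M[\Sigma[w_1]\dots[w_K]]$; or (ii) $w$ occurs inside $w_k$, so $w$ also occurs inside $\Sigma[w_1]\dots[w_K]$ in the original; or (iii) $w$ strictly contains $w_k$, in which case the corresponding subterm $w'$ of the original (with $\Sigma[w_1]\dots[w_K]$ in place of $w_k$) has free variables a superset of those of $w$. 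In each case the cellularity hypothesis places the free variables of $w$ in $\{y_1,\dots,y_n\}$.

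For the semi-cellular case, if $\lambda y_1\dots y_n.M[\Sigma[w_1]\dots[w_K]]$ is in fact cellular we are done by the previous step. Otherwise the body has the shape $y_j\,u_1\dots u_p$ with each $\lambda y_1\dots y_n.u_i$ cellular. If $M=\Box$, then the body equals $\Sigma[w_1]\dots[w_K]$ and $w_k$ sits inside some $u_{i_0}$; the alternative characterization applied to the cellular $\lambda y_1\dots y_n.u_{i_0}$ immediately yields that $\lambda y_1\dots y_n.w_k$ is cellular, hence semi-cellular. If $M\neq\Box$, then $\Sigma[w_1]\dots[w_K]$ lies strictly inside some $u_{i_0}=M'[\Sigma[w_1]\dots[w_K]]$; the cellular case applied to $\lambda y_1\dots y_n.u_{i_0}$ gives $\lambda y_1\dots y_n.M'[w_k]$ cellular, and reassembling produces the semi-cellular term $\lambda y_1\dots y_n.y_j\,u_1\dots M'[w_k]\dots u_p$.

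I expect the main obstacle to be the bookkeeping around variable binding: one must verify that the free variables of $w_k$ viewed as a subterm of the big term coincide with those of $w_k$ viewed as a standalone term under $\lambda y_1\dots y_n$, and this relies on Barendregt's convention together with the fact that the $C_i$'s in $\Sigma$ are closed contexts. Once that coincidence is secured, the proof reduces to a straightforward position analysis of the hole of $M$.
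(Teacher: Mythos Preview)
Your argument is correct. The paper does not spell out a proof, indicating only that it goes ``by induction on the length of $t$,'' presumably unfolding the recursive definition of (semi-)cellular and tracking where the cell $\Sigma$ sits at each stage. You take a different route: you invoke the alternate characterization of cellularity (closedness plus the free-variable condition on every subterm headed by some $y_i$) and verify it directly for $\lambda y_1\dots y_n.M[w_k]$ by a three-way position analysis of subterms relative to the hole of $M$. This avoids induction entirely in the cellular case, and the semi-cellular case then reduces to the cellular one in a single step. Your approach is arguably the more natural one here, since the alternate characterization was introduced precisely to make such verifications local; the paper's inductive route would work equally well but requires repeatedly peeling off cells. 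One small caveat worth making explicit: your inclusion $\mathrm{FV}(w_k)\subseteq\mathrm{FV}(\Sigma[w_1]\dots[w_K])$ tacitly assumes that the hole $[\ ]_k$ actually occurs in $\Sigma$ (the paper's convention allows a multi-hole context to omit some of the listed holes); this assumption is however implicit in the proposition itself and in every use the paper makes of it.
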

Note that for all $w_1,\dots,w_K$, 
$\Sigma[w_1]\dots [w_K] = \Sigma[x_1]\dots [x_k]
[w_1/x_1\dots w_K/x_K]$ where $x_1,\dots x_k$ are pairwise distinct variables of ground type.
Furthermore, if the free variable of $\Sigma$
 belongs to $\{y_1,\dots,y_n\}$, then for all closed $t_1,\dots, t_n$,
the normal form of $\Sigma[x_1]\dots [x_k][t_1/y_1\dots t_n/y_n]$
is either equal to some $x_i$, or equal to a constant of ground type.
As a consequence,
\begin{proposition}\label{repcell} (Shrinking lemma) Let $\Sigma$ be
 a cell whose free variable is among $y_1,\dots, y_n$. Then:
$$\lambda y_1\dots y_n.M[\Sigma[w_1]\dots 
[w_{k-1}][N[\Sigma[v_1]\dots[v_K]][w_{k+1}]\dots[w_K]]$$
is observationally equivalent to 
$$\lambda y_1\dots y_n.M[\Sigma[w_1]\dots 
[w_{k-1}][N[v_k]][w_{k+1}]\dots  [w_K]]$$
\end{proposition}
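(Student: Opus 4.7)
The plan is to use the Context Lemma to reduce observational equivalence to a $\beta$-equality statement after substituting arbitrary closed terms for $y_1,\dots,y_n$, and then to do a case analysis on the normal form of the closed-up outer cell, relying on the ``selector or constant'' remark that immediately precedes the proposition.

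More precisely, writing
$$S = \Sigma[w_1]\dots[w_{k-1}][N[\Sigma[v_1]\dots[v_K]]][w_{k+1}]\dots[w_K], \quad S' = \Sigma[w_1]\dots[w_{k-1}][N[v_k]][w_{k+1}]\dots[w_K],$$
it suffices to prove that $S[\overline{t}/\overline{y}] =_\beta S'[\overline{t}/\overline{y}]$ for every tuple $\overline{t} = t_1,\dots,t_n$ of closed terms of the appropriate types: this yields $M[S][\overline{t}/\overline{y}] =_\beta M[S'][\overline{t}/\overline{y}]$, and applying both sides to further closed arguments preserves $\beta$-equality down to ground type, giving the required observational equivalence.

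Now, as the paragraph before the proposition observes, $\Sigma[x_1]\dots[x_K][\overline{t}/\overline{y}]$ (with fresh ground-type variables $x_i$ in the holes) has normal form either some $x_i$ or a constant $a$, and because the $x_i$ are all of ground type this normal form does not depend on what is later substituted for them. The case analysis is then almost mechanical. If the normal form is $x_i$ for some $i\neq k$, both $S[\overline{t}/\overline{y}]$ and $S'[\overline{t}/\overline{y}]$ $\beta$-reduce to $w_i[\overline{t}/\overline{y}]$; if it is a constant $a$, both reduce to $a$.

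The only substantive case is when $\Sigma[x_1]\dots[x_K][\overline{t}/\overline{y}]$ normalizes to $x_k$: then $S[\overline{t}/\overline{y}]$ reduces to $N[\overline{t}/\overline{y}][\Sigma[v_1]\dots[v_K][\overline{t}/\overline{y}]]$ while $S'[\overline{t}/\overline{y}]$ reduces to $N[\overline{t}/\overline{y}][v_k[\overline{t}/\overline{y}]]$. Here the same remark must be reapplied to the inner copy of $\Sigma$: its closed-up form still normalizes to $x_k$ independently of the fillers, so substituting the closed ground-type terms $v_j[\overline{t}/\overline{y}]$ for the $x_j$ yields $v_k[\overline{t}/\overline{y}]$, which matches $S'$. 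The only real obstacle I foresee is bookkeeping: keeping the substitution of the $y_i$'s by the $t_i$'s separate from the substitution of the fresh hole-variables $x_i$ by the (now closed) hole-fillers, and checking carefully that the selector-or-constant behaviour of $\Sigma$ commutes with both.
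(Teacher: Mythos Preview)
Your proposal is correct and follows the same route the paper intends: the paper gives no explicit proof of this proposition, writing only ``As a consequence'' of the selector-or-constant remark, and your case analysis is exactly the detailed expansion of that remark. Two minor comments: the Context Lemma is not actually needed here (you only use that $\beta$-equality is a congruence together with the definition of $\equiv$), and since $M[S]$ is already of ground type there are no ``further closed arguments'' to apply---but neither point affects correctness.
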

(the names ``stretching'', ``shrinking'' were found by Thierry Joly around 1996).\pagebreak
\subsection*{2. Existence of cellular representatives}
\begin{lemma} Every semi-cellular term $t$ is observationally equivalent to 
a cellular term.
\end{lemma}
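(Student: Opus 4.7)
I would prove the lemma by induction on the length of $t$. In the base case, $t$ is already cellular (in particular when $t = \lambda y_1\dots y_n.a$) and there is nothing to show. For the inductive step, assume $t = \lambda y_1\dots y_n.y_i\,u_1\dots u_p$ is semi-cellular but not cellular, with each $\lambda y_1\dots y_n.u_j$ cellular.

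By the Context Lemma, it suffices to produce, for each $j$, a term $u'_j$ with $\lambda y_1\dots y_n.u_j \equiv \lambda y_1\dots y_n.u'_j$ together with a decomposition $u'_j = C_j[w_{j,1}]\dots[w_{j,K_j}]$ where $C_j$ is a closed context with ground-type holes and each $\lambda y_1\dots y_n.w_{j,m}$ is cellular. The assembled $\lambda y_1\dots y_n.y_i\,u'_1\dots u'_p$ is then a cellular term observationally equivalent to $t$.

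To construct such a decomposition from $u_j = \lambda z_1\dots z_{k_j}.v_j$, I would descend through the cellular structure of the cellular term $\lambda y_1\dots y_n z_1\dots z_{k_j}.v_j$. Cells whose head lies in $\{z_1,\dots,z_{k_j}\}$ form the local skeleton kept inside $C_j$; cells whose head is some $y_k \in \{y_1,\dots,y_n\}$ whose sub-structure contains no free occurrence of any $z_l$ are extracted as hole-fillers $w$, and Proposition~\ref{simplcell} guarantees that $\lambda y_1\dots y_n.w$ is cellular. The only obstruction is a cell with head some $y_k$ that still contains, through one of its hole-fillers, a free occurrence of some $z_l$.

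To eliminate such a dependency I would apply the Stretching lemma to the full body of $t$: choose a ground-type subterm $s$ of $v_j$ in which $z_l$ occurs, write the body of $t$ as $M[s]$, and replace it by $M[M[s]]$. Since $M[s] = y_i\,u_1\dots u_p$ has no free $z_l$, this rewrite substitutes a $z_l$-free term at the position of $s$, removing the offending dependency there. Combining this with the Shrinking lemma and Proposition~\ref{simplcell} to collapse the duplicated $y_i$-cell, and iterating over the remaining problematic subterms, would eventually yield a semi-cellular term admitting a direct cellular decomposition. The main obstacle I expect is the termination argument: since Stretching lengthens the term while Shrinking compresses it, the proof requires identifying a refined induction measure (presumably based on the depth or multiplicity of cells with global head carrying local dependencies) that strictly decreases at each rewrite. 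The minimal-model-specific validity of Stretching and Shrinking --- resting ultimately on the fact that a ground-type closed context $\beta$-reduces to either a hole variable or a constant --- is what makes the entire approach possible.
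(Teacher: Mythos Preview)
Your proposal has a genuine gap: you correctly identify that termination is ``the main obstacle'', but you do not supply the missing measure, and the iterative Stretch-then-Shrink scheme you sketch does not obviously admit one. After Stretching replaces $s$ by $M[s]$, the body becomes strictly longer; the ``collapse'' you hope to perform via Shrinking is not well-specified, since the outer head occurrence $(y_i\,u_1\dots u_p)$ is not a single cell in your decomposition (the $u_j$ still contain free $y$'s), so Proposition~\ref{repcell} does not apply to it directly. Without a concrete decreasing quantity, the argument is incomplete.

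The paper avoids the difficulty by a different organisation that makes the induction on length immediate. Rather than trying to repair each $u_j$ locally, it works globally on the whole body $u = (y_i\,u_1\dots u_p)$: take the minimal context $M$ with $u = M[t^1]\dots[t^K]$ where the $t^k$ are the maximal ground subterms headed by some $y\in\{y_1,\dots,y_n\}$. Each $t^k$ lies inside some cellular $\lambda\bar y.u_j$, so $t^k = \Sigma^k[w^k_1]\dots[w^k_{L_k}]$ for a cell $\Sigma^k$. Now for every pair $(k,l)$ set
\[
N^k_l \;=\; M[t^1]\dots[t^{k-1}][\,w^k_l\,][t^{k+1}]\dots[t^K].
\]
Proposition~\ref{simplcell} gives that $\lambda\bar y.N^k_l$ is semi-cellular, and it is strictly shorter than $t$ (one layer of $\Sigma^k$ removed), so the induction hypothesis yields cellular equivalents $u^k_l$. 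The cellular candidate is then $\lambda\bar y.M[\Sigma^1[u^1_1]\dots[u^1_{L_1}]]\dots[\Sigma^K[u^K_1]\dots[u^K_{L_K}]]$. Equivalence with $t$ follows by first Stretching every $w^k_l$ to the full body $u$, obtaining $M[\Sigma^1[u]\dots[u]]\dots[\Sigma^K[u]\dots[u]]$, and then Shrinking each $\Sigma^k[u]\dots[u]$ to $\Sigma^k[N^k_1]\dots[N^k_{L_k}]$. The point you are missing is precisely this global definition of the $N^k_l$: it packages the Stretch/Shrink step into a single application of the induction hypothesis on a \emph{shorter} term, so no auxiliary termination measure is needed.
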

\begin{proof}
 By induction on the length of $t$.
 Assume $t = \lambda y_1\dots \lambda y_n.u$ is in normal form, 
 with $u = (y_i\, u_1\dots u_p)$.
 Let $M$ be the minimal context such that
\begin{itemize} 
\item $M\neq [\ ]$,
\item $u = M[t^1]\dots [t^K]$
\item each $t^k$ is a term of ground type, of the form
 $(y\,\dots\,)$ with $y\in\{y_1\dots y_n\}$.
\end{itemize}
For each $k$, $t^k$ is a subterm of some $u_j$, and the closure of this
 latter term is cellular. Hence,
 there exists a cell $\Sigma^k$ and terms $w^k_1\dots w^k_{L_k}$ such that: 
$$t^k = \Sigma^k[w^k_1]\dots[w^k_{L_k}]$$ 
Now, for each $(k,l)$ let
$$N^k_l = M[t^1]\dots[t^{k-1}][w^k_l][t^{k+1}]\dots[t^K]$$
By proposition \ref{simplcell}, for all $(k,l)$,
 $\lambda y_1\dots y_n.N^k_l$ is semi-cellular. By induction hypothesis, there exists a term $u^k_l$
 whose closure is cellular and equivalent to the closure of $N^k_l$.
We define $u'$ as
$$ M[\Sigma^1[u^1_1]\dots[u^1_{L_1}]]\dots [\Sigma^K[u^K_1]\dots[u^K_{L_K}]]$$
Clearly, $t'= \lambda y_1\dots y_n.u'$ is cellular. 
 We claim that $t$ and $t'$ are equivalent. 
By proposition \ref{rep}, $t$ is equivalent to
$\lambda y_1\dots y_n.M[\Sigma^1[u]\dots [u]]\dots [\Sigma^K[u]\dots[u]]$. 
For each $k$, $\lambda y_1\dots y_n.\Sigma^k[u]\dots [u]$ is a closed
 term. By proposition \ref{repcell}, this term is equivalent to 
$\lambda y_1\dots y_n.\Sigma^k[N^k_1]\dots[N^k_{L_k}]$, thereby equivalent to\\
$\lambda y_1\dots y_n.\Sigma^k[u^k_1]\dots[u^k_{L_k}]$.
 The conclusion follows from the definition of $u'$.
\qed\end{proof}

\begin{theorem}\label{cellequiv} Every closed 
 term $t$ is observationally equivalent to a cellular term.
\end{theorem}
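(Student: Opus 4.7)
My plan is to induct on the length of the closed term $t$ (in $\eta$-long normal form), reducing the general case to the preceding semi-cellular lemma. Writing $t = \la y_1 \dots y_n . u$, if $u$ is a constant then $t$ is already cellular; otherwise, $t$ being closed forces $u = y_i\,u_1 \dots u_p$ for some $y_i$ among the bound variables. The subcase $p = 0$ gives $t = \la \bar y. y_i$, itself cellular via the one-node cell $y_i$, so I may assume $p \ge 1$.

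For each $j$ the term $\la y_1 \dots y_n . u_j$ (absorbing any $\eta$-long abstractions from $u_j$ into the prefix) is closed and strictly shorter than $t$, since one loses at least the head $y_i$ and the arguments $u_k$ with $k \neq j$. By the induction hypothesis, it is equivalent to a cellular term of the same type, which after $\alpha$-renaming I write $\la \bar y. u'_j$. Now set $t'' = \la \bar y. y_i\,u'_1 \dots u'_p$. By construction $t''$ is semi-cellular (its head $y_i$ is bound, and each $\la \bar y. u'_j$ is cellular), so the preceding lemma yields a cellular term equivalent to $t''$.

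It remains to check $t \equiv t''$, after which transitivity of $\equiv$ concludes. For any closed tuple $\bar t = t_1, \dots, t_n$, both $t\,\bar t$ and $t''\,\bar t$ $\beta$-reduce to $t_i$ applied respectively to the $u_j[\bar t/\bar y]$ and the $u'_j[\bar t/\bar y]$; feeding $\bar t$ into the inductive equivalences $\la \bar y. u_j \equiv \la \bar y. u'_j$ gives $u_j[\bar t/\bar y] \equiv u'_j[\bar t/\bar y]$ for each $j$, and $p$ applications of the Context Lemma across the arguments of $t_i$ then deliver the desired equivalence. The only real subtlety, rather than a genuine obstacle, is this iterated use of the Context Lemma at possibly higher type when the $u_j$'s are themselves functional; the substantive combinatorial content has already been absorbed by the preceding lemma on semi-cellular terms.
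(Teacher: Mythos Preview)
Your proof is correct and follows essentially the same approach as the paper: induct on the length of $t$, apply the induction hypothesis to each $\la\bar y.u_j$ to obtain a cellular $\la\bar y.u'_j$, observe that $\la\bar y.\,y_i\,u'_1\dots u'_p$ is semi-cellular and equivalent to $t$, and invoke the preceding lemma. You spell out a few details the paper leaves implicit (the constant and $p=0$ base cases, the length comparison, and the Context-Lemma argument for $t\equiv t''$), but the structure is identical.
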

\begin{proof}
By induction on the length of $t$.
 If $t = \lambda y_1\dots y_n.(y_i\,u_1\dots u_p)$,
 then by induction hypothesis there exist terms $u'_1,\dots,u'_n$
 such that for each $i$, $\lambda y_1\dots \lambda y_n\, u'_i$
 is a cellular term equivalent to  $\lambda y_1\dots \lambda y_n\, u_i$.
 The term $t' = \lambda y_1\dots y_n.(y_i\,u'_1\dots u'_p)$
 is semi-cellular and equivalent to $t$. The conclusion follows 
 from the preceding lemma. 
\qed\end{proof}
\begin{corollary}
For all types $A$, there exists a term $t:A\rightarrow A$ such that for all terms $w:A$, the normal form of
 $(t\,w)$ is a cellular term equivalent to $w$.
\end{corollary}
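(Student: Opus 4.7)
The plan is to apply Theorem \ref{cellequiv} to the identity term itself. Let $I_A = \lambda w.\, w : A \to A$, taken in $\eta$-long form; this is a closed term of type $A \to A$, and so by Theorem \ref{cellequiv} it is observationally equivalent to a cellular term $t : A \to A$, which I take as the claimed selector. The equivalence part of the corollary is immediate: since $t \sim I_A$, the Context Lemma gives $(t\, w) \sim (I_A\, w) = w$ for every closed $w : A$.

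The substantive part is that $\mathrm{nf}(t\, w)$ is cellular for every closed input. Writing $A = A_1 \to \cdots \to A_n \to \circ$, the cellular term $t$ must have the form $\lambda w\, y_1 \cdots y_n.\, v'$ with $v' : \circ$ built as a nested hierarchy of cells whose heads lie in $\{w, y_1, \ldots, y_n\}$, whose cell-contexts $C_i$ have no free variables (carrying only ground-type holes), and whose hole-fillers are themselves cellular under the outer abstraction. Upon substituting the specific closed input for $w$ and $\beta$-normalizing, cells headed by some $y_j$ survive unchanged, while cells headed by $w$ unfold: the input is applied to closed cell-contexts (with hole positions occupied by cellular fillers), and because the arguments carry no free outer variables beyond what is mediated by the ground-type holes, the ensuing $\beta$-reductions cannot introduce outer-variable dependence outside the cellular scaffolding. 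A structural induction on the nesting depth of cells then shows that $\mathrm{nf}(t\, w)$ is cellular over $\{y_1, \ldots, y_n\}$.

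The main obstacle is this preservation-under-substitution argument, essentially a uniform deployment of Proposition \ref{repcell} (Shrinking): it amounts to showing that substituting an arbitrary closed term for an outer variable of a cellular term and then normalizing preserves cellular structure. The intuition --- that the closed cell-contexts shield the input's computation from outer-variable leakage, and that the ground-type restriction on holes keeps all dependence within the cellular framework --- is transparent, but the detailed justification requires carefully tracking the $\beta$-reductions against the cellular skeleton. One can verify the mechanism in a concrete instance, e.g.\ at $A = ((\circ \to \circ) \to \circ) \to \circ$ where the cellular equivalent of $I_A$ may be taken to be $\lambda w\, f.\, w(\lambda g_1.\, f(\lambda x.\, w(\lambda g_2.\, g_2(f(\lambda x'.\, x')))))$, and check by direct computation that applying it to various closed $w$ (including non-cellular normal forms such as $\lambda f.\, f(\lambda x.\, f(\lambda y.\, x))$) always yields a cellular normal form equivalent to the input.
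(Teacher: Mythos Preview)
Your construction is exactly the paper's: take $t$ to be the cellular equivalent of the $\eta$-long identity $\lambda x.x:A\to A$, furnished by Theorem~\ref{cellequiv}. The paper's proof is in fact that single sentence; it does not spell out why the normal form of $(t\,w)$ is cellular.

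You try to fill that gap, and your instincts are right, but you make it sound harder than it is. The ``obstacle'' dissolves once you use the observation recorded just before Proposition~\ref{repcell} (not Proposition~\ref{repcell} itself): if $\Sigma$ is a cell with head variable $w$ and $W$ is closed, then $\Sigma[x_1]\dots[x_K][W/w]$ is a ground-type term whose only free variables are the ground-type $x_k$, hence its normal form is either a constant or a single $x_k$. Consequently, for a $w$-headed cell $\Sigma[s_1]\dots[s_K]$ inside $t$, the substitution $[W/w]$ followed by normalisation yields either a constant or $\mathrm{nf}(s_k[W/w])$ for some $k$. For a $y_j$-headed cell, the cell itself is untouched (its contexts $C_i$ are closed, so contain no $w$) and only the hole-fillers are affected. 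A straight induction on the cell-nesting depth of $v'$ then shows that $\lambda y_1\dots y_n.\mathrm{nf}(v'[W/w])$ is cellular: each $w$-cell collapses to a constant or to one of its (inductively cellular) fillers, and each $y_j$-cell persists with inductively cellular fillers. No delicate tracking of $\beta$-reductions is needed, and Shrinking is not the relevant tool here.
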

\begin{proof}
Define $t$ as the cellular equivalent of the $\eta$-long form of $\lambda x\, x$.\qed
\end{proof}
\subsection*{3. Decidability of all minimal models}
For any finite set of constants $\cal C$, we write $\equiv_{\cal C}$ the restriction
 of observational equivalence between closed terms whose constants belong to $\cal C$.
\begin{theorem} There exists a comptable function $\cal R$ such that
 for all types $A$, and for any finite set of constants of ground type
 $\cal C$, ${\cal R}(A,{\cal C})$ is a finite list of terms containing
 a representative of each $\equiv_{\cal C}$-class.
\end{theorem}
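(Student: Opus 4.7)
By Theorem~\ref{cellequiv}, every closed term is observationally equivalent to a cellular one, so it suffices to compute, for each type $A$ and each finite $\cal C$, a finite list of cellular terms of type $A$ hitting every $\equiv_{\cal C}$-class. The plan is induction on the order of $A$, establishing simultaneously the computability of ${\cal R}(A,{\cal C})$ and the decidability of $\equiv_{\cal C}$ at $A$.

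For $A = \circ$, take ${\cal R}(\circ,{\cal C}) = {\cal C}$; equivalence is decided by $\beta$-normalization. For $A = A_1 \to \dots \to A_n \to \circ$ of order $\ge 1$, the induction hypothesis supplies ${\cal R}(A_i,{\cal C})$ and decidability of $\equiv_{\cal C}$ at each $A_i$. By the Context Lemma, a closed $t:A$ is determined up to $\equiv_{\cal C}$ by its behaviour function $\prod_i {\cal R}(A_i,{\cal C}) \to {\cal C}$ sending $(u_1,\ldots,u_n)$ to the $\beta$-normal form of $(t\,u_1\ldots u_n)$; hence $\equiv_{\cal C}$ at $A$ is decidable by comparing these finitely-specified functions, and the total number of $\equiv_{\cal C}$-classes at $A$ is bounded a priori by $|{\cal C}|^{\prod_i |{\cal R}(A_i,{\cal C})|}$.

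To produce ${\cal R}(A,{\cal C})$ itself, I would enumerate cellular terms of type $A$ up to a computable size bound and retain one representative per distinct behaviour. The size bound comes from the Shrinking lemma combined with the outer induction: a cell $\Sigma = (y_i\,C_1\ldots C_p)$ appearing in a cellular term has closed completions $\hat C_j = \la x_1\ldots x_K.C_j$ of type $\circ\to\cdots\to\circ\to B_j$ (with $K$ ground-type arguments), whose order is strictly less than $\mathrm{ord}(A)$ because the existence of such a cell with $p \ge 1$ forces $y_i$ to have function type and hence $\mathrm{ord}(A)\ge 2$. Therefore only finitely many cells $\Sigma_1,\ldots,\Sigma_L$ arise up to $\equiv_{\cal C}$ of their completions; their hole counts $K$ are bounded by the lower-order size bound for such completions; and Proposition~\ref{repcell} collapses any nested repetition of (an equivalent) cell, capping the cell-nesting depth of canonical cellular representatives by $L$ times the maximal hole-arity of a $\Sigma_\ell$.

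The main obstacle, I expect, is turning the Shrinking lemma into a genuinely effective quantitative bound, and verifying that substituting an $\equiv_{\cal C}$-equivalent cell for another preserves $\equiv_{\cal C}$ of the enclosing cellular term --- a matter of iterating the Context Lemma carefully under the outer $\la$-binders, since a cell is an open context rather than a closed term.
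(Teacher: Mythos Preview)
Your strategy --- enumerate cellular terms up to a computable size bound and sort them by behaviour --- is reasonable, but the size bound you propose has a circularity. You want to cap the hole-arity $K$ of a cell $\Sigma = (y_i\,C_1\dots C_p)$ by appealing to the induction hypothesis at the type $\circ^K\to B^i_j$ of the completions $\hat C_j$. But that type \emph{depends on $K$}: for each $K$ the IH gives finitely many classes of completions, yet nothing stops $K$ itself from growing, and the class count at $\circ^K\to B^i_j$ grows with $K$. So ``their hole counts $K$ are bounded by the lower-order size bound for such completions'' does not follow. The correct bound on $K$ comes from a different place: for any closed $u:A_i$, the normal form of $\Sigma[x_1]\dots[x_K][u/y_i]$ is either a constant or some $x_l$, and $\equiv_{\cal C}$-equivalent $u$'s give the same result; hence at most $K_i$ holes (where $K_i$ bounds the number of $\equiv_{\cal C}$-classes at $A_i$, which you \emph{have} computed) can ever be reached, and the remaining holes can be filled by an arbitrary constant. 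You have $K_i$ available, you just did not use it here.

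The paper avoids the whole ``enumerate-then-classify'' detour. Having fixed $K_i$ as above, it names the holes by \emph{fresh constants} $d^i_1,\dots,d^i_{K_i}$ and recurses not on $\circ^K\to B^i_j$ but directly on $B^i_j$ with the \emph{enlarged} constant set ${\cal C}\cup\{d^i_1,\dots,d^i_{K_i}\}$, obtaining a finite set $V_i$ of cell-templates $(y_i\,\overline w)$. The output $R$ is then the least set containing the constants and closed under plugging already-built members of $R$ into a template $v\in V_i$ \emph{not yet used on that branch}; this last clause is exactly what makes $R$ finite and replaces your depth bound via Shrinking. Completeness is by induction on the length of a cellular $t$, invoking the Shrinking lemma once to justify that the chosen $v_\Sigma$ need not reappear below itself. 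This sidesteps both of your anticipated obstacles: no quantitative size bound is ever written down, and no ``replace a cell by an equivalent cell'' step is needed, because the recursion on $B^i_j$ with fresh constants already produces concrete representatives of every cell.
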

\begin{proof}
Let $A = A_1\dots A_n\rightarrow\circ$, where
  $A_i = B^i_1\dots B^i_{p_i}\rightarrow\circ$.
 Let $y_1:A_1,\dots, y_n:A_n$.
For each $i\in[1\dots n]$, let $K_i$ be any integer greater\footnote{If $|{\cal R}(B^i_j,{\cal C})| = k_j$ then we can take $K_i = |{\cal C}|^{(k_1\times\dots\times k_{p_i})}$.}
 than the number of $\equiv_{\cal C}$-classes of type $A_i$.
For each $j\in[1\dots p_i]$,
 let $W^j_i$ be a
 complete set of representatives for the pair
 $(B^i_j,{\cal C}\cup\{d^i_1\dots d^i_{K_i}\})$, where the $d^i_k$'s
  are fresh constants of type $\circ$.
We let $V_i$ be the set of all terms of the form 
$(y_i\,{\overline w})$.
 where ${\overline w}\in  \Pi_{j=1}^{p_i} W^j_i$. We define $\cal R(A,{\cal C})$ as the least set of terms $R$ satisfying:
\begin{itemize}
\item $\lambda \overline y.a\in R$  for all $a\in{\cal C}$.
\item if $\lambda \overline y.w_1,\dots,\lambda \overline y.w_{K_i}\in R$,
 and if $v\in V_i$ is not used in the construction of these terms,
 then $\lambda \overline y.v[w_1/d^i_1]\dots[w_{K_i}/d^i_{K_i}]\in R$.
\end{itemize}
Clearly, $R$ is a finite set of cellular terms.
 Let $t$ be any closed term.  We shall prove that there exists
 a term in $R$ equivalent to $t$. By the preceding theorem, we can assume that $t$ is cellular. We proceed by induction on the length of $t$.

Suppose
 $t =_\alpha \la y_1\dots y_n.\Sigma[s_1]\dots [s_L]$ where $\Sigma$ is a cell of head
 variable $y_i$. The number
 of holes appearing in the normal forms of all $\Sigma[u_i/y_i]$
 where $u_i$ is a closed term, is bounded by $K_i$. If $[\ ]_l$
 does not appear in these normal forms, 
 then we can replace  $s_l$ 
 with an arbitrary constant, yielding a term of same class.
 As a consequence, we can assume that $L\leq K_i$. Then there exists in $V_i$ a term $v_{\Sigma}$ such that
$$\la y_i.\Sigma[d^i_1]\dots [d^i_L]\equiv \la y_i.v_{\Sigma}$$
By induction hypothesis there exists $r_1,\dots, r_L$ such that  $\la\overline y.s_1 \equiv \la\overline y.r_1\in R,\dots,
 \la\overline y.s_L \equiv \la\overline y.r_L\in R$. Then:
$$t \equiv \la y_1\dots y_n.v_\Sigma[r_1/d^i_1\dots r_L/d^i_L]$$
The conclusion follows from the fact that, as in proposition
 \ref{repcell}, we do not need to use $v_\Sigma$
 in the construction of $\la \overline y.r_1,\dots \la \overline y.r_L$,
 that is to say,
$$ \la\overline y.v_\Sigma[r_1/d^i_1
\dots r_{l-1}/d^i_{l-1},
 M[v_\Sigma[r'_1/d^i_1\dots r'_L/d^i_L]]/d^i_l,
r_{l+1}/d^i_{l+1},\dots,  r_L/d^i_L]$$
is equivalent to 
$$\la\overline y.v_\Sigma[r_1/d^i_1
\dots r_{l-1}/d^i_{l-1},
 r'_l/d^i_l,
r_{l+1}/d^i_{l+1},\dots,  r_L/d^i_L]$$
\mbox{}\qed
\end{proof}\pagebreak

\noindent{\em Remark.} Call {\em hereditary cellular} every cellular $t$ such that
 for each cell 
$$y(C_1[\ ]_1\dots [\ ]_K)\,\dots (C_n[\ ]_1\dots [\ ]_K)$$ in $t$, each 
 $\la x_1\dots \la x_K\,C_j[x_1]\dots [x_K]$ is hereditary cellular.
 Note that all terms returned by the algorithm are hereditary cellular. It it not too difficult to prove that all terms returned by a restriction of  Schmidt-Schau\ss' 
algorithm to a minimal model (see \cite{Loader1997} for Loader's presentation of Schmidt-Schau\ss' 
algorithm) are also hereditary cellular.
\begin {thebibliography}{AA}
\bibitem{Loader1997} 
Loader, R. (1997) An algorithm for the minimal model.\\
\verb!http://homepages.ihug.co.nz/~suckfish/papers/atomic.pdf!
\bibitem{Loader1998}
Loader, R., (1998) Unary PCF is decidable. {\em Theorical Computer Science} \textbf{206}~(1-2), 317--329.
\bibitem{Padovani1995}
Padovani, V. (1995) Decidability of All Minimal Models. In: Berardi, S., Coppo, M. (Eds), TYPES 1995, \emph{Lecture Notes in Computer Science} \textbf{1158}, 201--215.
\bibitem{Padovani1996}
Padovani, V. (1996) Filtrage d'Ordre Sup\'erieur. Th\`ese de doctorat, Université Paris VII.
\bibitem {Padovani2000}
Padovani, V. (2000) Decidability of fourth-order matching. {\em Mathematical Structures in Computer Science} \textbf{10} (3) 361--372.
\bibitem{SchmidtSchauss1999}
Schmidt-Schauss, M., (1999) Decidability of Behavioural Equivalence in Unary PCF. {\em Theorical Computer Science} \textbf{216}~(1-2), 363--373.
\end {thebibliography}
\end{document}